\newtheorem{lemma}{Lemma}
\newtheorem{theorem}{Theorem}
\newcommand{\BE}{\begin{equation}}
\newcommand{\EE}{\end{equation}}
\newcommand{\be}{\begin{equation}}
\newcommand{\ee}{\end{equation}}
\newcommand{\BA}{\begin{eqnarray}}
\newcommand{\EA}{\end{eqnarray}}
\newcommand{\BAN}{\begin{eqnarray*}}
\newcommand{\EAN}{\end{eqnarray*}}
\newcommand{\BD}{\begin{description}}
\newcommand{\ED}{\end{description}}
\newcommand{\BEN}{\begin{enumerate}}
\newcommand{\EEN}{\end{enumerate}}
\newcommand{\BM}{\begin{minipage}}
\newcommand{\EM}{\end{minipage}}
\def\@email#1#2{%
 \endgroup
 \patchcmd{\titleblock@produce}
  {\frontmatter@RRAPformat}
  {\frontmatter@RRAPformat{\produce@RRAP{*#1\href{mailto:#2}{#2}}}\frontmatter@RRAPformat}
  {}{}
}%
\begin{document}

\preprint{AIP/123-QED}

\title{Rigged Hilbert Space Formulation of Quantum Thermo Field Dynamics and Mapping to Rigged Liouville Space}
\author{J.~Takahashi}%
\affiliation{
Department of Economics, Asia University, 
Musashino-shi, Tokyo 180-0022, Japan.
}%
\author{S.~Ohmori}
\altaffiliation{These two authors contributed equally to this work.}
\affiliation{Department of Economics, Hosei University, Machida-shi, Tokyo 194-0298, Japan.}

\date{\today}

\begin{abstract}
The rigged Hilbert space, a triplet extension of the Hilbert space, provides a mathematically rigorous foundation for quantum mechanics by extending the Hilbert space to accommodate generalized eigenstates.
In this paper, we construct a triplet structure for the thermal space arising in Thermo Field Dynamics with the aid of the tensor product formulation of rigged Hilbert spaces, a formalism that reformulates thermal averages as pure-state expectation values in a doubled Hilbert space.
We then induce the rigged Liouville space for Liouville space of density operators from the triplet structure for Thermo Field Dynamics; the rigged Liouville space corresponds isomorphically one-to-one with that of Thermo Field Dynamics.
This correspondence offers a unified topological foundation for quantum statistical mechanics at finite temperature and establishes a framework for future generalizations to open and non-equilibrium quantum systems.
\end{abstract}

\maketitle


\section{Introduction}
\label{sec:intro}
The rigged Hilbert space (RHS) formalism, based on the Gelfand triplet\cite{Gelfand1964, Maurin1968}, offers a mathematically rigorous extension of the Hilbert space framework for quantum mechanics \cite{Robert1966a, Robert1966b, Antoine1969a, Antoine1969b, Melsheimer1974a, Melsheimer1974b, Bohm1978, Bohm1981, Prigogine1996, Bohm1998, Antoiou1998, Antoiou2003, Gadella2003, Madrid2002, Madrid2003, Madrid2004, Madrid2005, Antoine2009, Madrid2012, Antoine2021}. 
It enables the incorporation of unbounded operators, distributions such as $\delta$-function, and generalized eigenvectors. 
The RHS is defined as a triplet \( \Phi \subset \mathcal{H} \subset \Phi', \, \Phi^\times \).
Here, \( \mathcal{H} \) denotes the Hilbert space and \( \Phi \) is a dense subspace of \( \mathcal{H} \) equipped with a nuclear topology \( \tau_\Phi \), namely, $(\Phi.\tau_{\Phi})$ becomes a nuclear space, such that the including mapping $i:\Phi \rightarrow \mathcal{H}, \varphi \mapsto \varphi$ is continuous with respect to $\tau_{\Phi}$. 
The dual space \( \Phi' \) (or anti-dual \( \Phi^\times \)) consists of continuous linear (or anti-linear) functionals on \(( \Phi, \tau_{\Phi}) \).
This structure provides a natural setting for describing continuous spectra and distributions in quantum theory, particularly in the formulation of bra-ket expressions involving generalized eigenstates such as plane waves and delta functions.
The RHS has been successfully applied to problems involving 
scattering theory~\cite{Madrid2004}, resonance phenomena~\cite{Bohm1981, Bollini1996a, Bollini1996b}, decaying states~\cite{Madrid2012}
, and non-Hermitial systems~\cite{Chruscinski2003, Chruscinski2004, Ohmori2022, Ohmori2024}.
Recent developments have seen the extension of the RHS framework to quantum statistical mechanics~\cite{Liu2013}, where it is applied to open systems and non-equilibrium states not adequately described within standard Hilbert space theory.

In the standard formulation of quantum statistical mechanics, the state of a system is described by a density operator, typically a positive trace-class operator, acting on a Hilbert space \( \mathcal{H} \). 
To study such states and their dynamical evolution more effectively, one introduces the Liouville space \( \mathfrak{L}(\mathcal{H}) \), the Hilbert space of Hilbert-Schmidt operators on \( \mathcal{H} \). 
This construction provides a mathematical foundation for the treatment of mixed states and their dynamics, utilizing the Hilbert–Schmidt inner product defined by the trace, i.e., \( \langle A, B \rangle _\textrm{HS} = \mathrm{Tr}(A^\dagger B) \).
An alternative but equivalent formulation is Thermo Field Dynamics (TFD) that expresses thermal averages using pure states in a doubled Hilbert space \cite{Umezawa1993}. 
TFD introduces a tilde-conjugate space $\tilde{\mathcal{H}}$ to construct thermal vacuum states, thereby allowing finite-temperature systems to be treated as pure states.
This formalism is expected to be useful in non-equilibrium and real-time formulations of quantum theory.
It has been established that there exists a structural correspondence between the Liouville space formulation and the TFD formalism, in which statistical operators in the Liouville space are mapped to pure states in the doubled Hilbert space. 
This correspondence preserves the trace inner product and enables the reinterpretation of thermal mixtures as entangled pure states. 
Mathematically, this is realized through isomorphisms between operator spaces and tensor product spaces, a perspective that has been rigorously developed in the framework of  \( C^* \)-algebras~\cite{Bratteli}, and physically applied in the context of TFD~\cite{Nakamura2013}.

The primary objective of this work is to extend the TFD formalism into the rigged space framework and to establish a rigorous correspondence with the Rigged Liouville Space (RLS). 
By introducing a triplet structure for TFD, we aim to develop a unified foundation for quantum statistical mechanics at finite temperature based on the triplet spaces.
The development of this triplet structure lays the foundation for extending the formalism to encompass non-equilibrium and dissipative quantum systems.

This paper begins, in Section~\ref{sec:body1}, by reviewing the tensor product structure of RHS, which provides the foundation for incorporating generalized states in composite quantum systems.
In Section~\ref{sec:body2}, we introduce the triplet space formulation of TFD, where thermal states are described within a doubled Hilbert space framework.  
In Section~\ref{sec:body3}, we then construct the rigged extension of the Liouville space, an operator space that plays a central role in conventional quantum statistical mechanics, and establish a nuclear space correspondence between the RLS and the TFD triplet space.  
Finally, Section~\ref{sec:conclusion} summarizes the main findings and discusses possible future directions.

\section{Construction of the bra-ket vectors in the dual spaces for the tensor product}
\label{sec:body1}

To formulate composite quantum systems within the RHS framework, we briefly review the tensor product construction based on Ref.~\cite{Ohmori2024}. 
Consider two subsystems described by RHSs \( \Phi_i \subset \mathcal{H}_i \subset \Phi_i', \Phi_i^\times \) for \( i = 1,2 \), where each \( \Phi_i \) is a nuclear space densely embedded in the Hilbert space \( \mathcal{H}_i \).
The starting point for both nuclear spaces and Hilbert spaces is the algebraic tensor product \( \Phi_1 \otimes \Phi_2 \) or \( \mathcal{H}_1 \otimes \mathcal{H}_2 \), which consists of finite linear combinations of elementary tensors \( \varphi_1 \otimes \varphi_2 \). 
This space is purely algebraic and lacks any topological structure, so to make it analytically meaningful, we equip it with a suitable topology and complete it.
Then, we obtain the tensor product \( \Phi_1 \widehat{\otimes} \Phi_2 \), which remains a nuclear space and serves as the test function space of the composite system.
For the Hilbert spaces, the algebraic tensor product \( \mathcal{H}_1 \otimes \mathcal{H}_2 \) is equipped with the canonical inner product
\[
\langle \varphi_1 \otimes \varphi_2, \phi_1 \otimes \phi_2 \rangle = \langle \varphi_1, \phi_1 \rangle_{\mathcal{H}_1} \langle \varphi_2, \phi_2 \rangle_{\mathcal{H}_2},
\]
and completed with respect to the induced norm topology to yield the Hilbert tensor product \( \mathcal{H}_1 \overline{\otimes} \mathcal{H}_2 \).
Thus, both constructions begin from the same algebraic tensor product but diverge through the choice of topology for completion. 
This leads to the RHS triplet for the composite system:
\[
\Phi_1 \widehat{\otimes} \Phi_2 \subset \mathcal{H}_1 \overline{\otimes} \mathcal{H}_2 \subset (\Phi_1 \widehat{\otimes} \Phi_2)',\; (\Phi_1 \widehat{\otimes} \Phi_2)^\times.
\]
For a factorized vector \( \varphi = \varphi_1 \otimes \varphi_2 \in \Phi_1 \widehat{\otimes} \Phi_2 \), the ket vector satisfies
\[
\ket{\varphi}_{\mathcal{H}_1 \overline{\otimes} \mathcal{H}_2} = \ket{\varphi_1}_{\mathcal{H}_1} \otimes \ket{\varphi_2}_{\mathcal{H}_2},
\]
as the relation of functions in $(\Phi_1 \widehat{\otimes} \Phi_2)^\times$.
A similar identity for bra vectors is also obtained in $(\Phi_1 \widehat{\otimes} \Phi_2)^\prime$. 
This confirms that the RHS tensor product is compatible with the standard formulation of composite quantum systems.

The construction naturally extends to finite \( N \)-body systems:
\[
\widehat{\otimes}_{j=1}^N \Phi_j \subset \overline{\otimes}_{j=1}^N \mathcal{H}_j \subset \left( \widehat{\otimes}_{j=1}^N \Phi_j \right)', \left( \widehat{\otimes}_{j=1}^N \Phi_j \right)^\times,
\]
supporting a mathematically rigorous formulation of composite quantum states and related frameworks.

\section{Rigged Hilbert Space Formulation of Thermo Field Dynamics}
\label{sec:body2}

In this section, we review the formalism of TFD and introduce a Gelfand triplet structure adapted to the TFD framework. 
This allows for a mathematical treatment of thermal quantum systems within the rigged Hilbert space formalism.

\subsection{Brief Overview of Thermo Field Dynamics}
TFD is a formalism developed to describe quantum statistical systems in thermal equilibrium by recasting thermal averages as vacuum expectation values in an extended Hilbert space. 
The central idea is to double the degrees of freedom by introducing a fictitious, non-physical system, referred to as the tilde system. 
This enables the representation of mixed thermal states as pure states in the extended space.
Within this framework, the thermal average \( \langle \, \cdot \, \rangle_\beta \) of an observable \( A \) with respect to the density operator \( \rho = e^{-\beta H} / Z \) can be expressed as:
\[
\langle A \rangle_\beta = \operatorname{Tr}(\rho A) = \langle 0(\beta) | A \otimes \mathbb{I} | 0(\beta) \rangle,
\]
where \( \beta \), \( Z \), and \( |0(\beta)\rangle \) denote the inverse temperature, the partition function, and the thermal vacuum state in the doubled Hilbert space, \( \mathcal{H} \overline{\otimes} \tilde{\mathcal{H}} \), respectively.

To justify this equivalence, we represent \( |0(\beta)\rangle \) as a purification of the density operator \( \rho \) in the extended Hilbert space. Specifically, let \( \{|n\rangle\} \) be the energy eigenbasis of the Hamiltonian \( H \), and define
\[
|0(\beta)\rangle = \sum_n \sqrt{p_n} \, |n\rangle \otimes |\tilde{n}\rangle, \quad p_n = \frac{e^{-\beta E_n}}{Z}.
\]
Then, for any observable \( A \) acting only on the physical space \( \mathcal{H} \), we obtain
\begin{align*}
\langle 0(\beta) | A \otimes \mathbb{I} | 0(\beta) \rangle
&= \sum_{n,m} \sqrt{p_n p_m} \langle n | A | m \rangle \delta_{nm}\\
&= \sum_n p_n \langle n | A | n \rangle
= \operatorname{Tr}(\rho A),
\end{align*}
thus confirming the equivalence with the standard formulation of Liouville space  at finite temperature.
%
%
Although the above derivation is presented using the energy eigenbasis for clarity, it is not essential to the formalism. The thermal vacuum \( |0(\beta)\rangle \) can be constructed from any orthonormal basis \( \{|e_j\rangle\} \), as long as the associated statistical weights \( \{p_j\} \) correctly reproduce the density operator \( \rho \).

\subsection{Rigged Hilbert Space Formulation}

To formulate TFD rigorously within the framework of generalized function spaces, we introduce a Gelfand triplet structure over the doubled Hilbert space, {\( \mathcal{H} \overline{\otimes} \tilde{\mathcal{H}} \)}. 
%
%
Suppose that the original quantum system is described by the RHS \( \Phi \subset \mathcal{H} \subset \Phi', \, \Phi^\times \).
Then, the fictitious tilde system is defined analogously as 
\(
\tilde{\Phi} \subset \tilde{\mathcal{H}} \subset \tilde{\Phi}', \tilde{\Phi}^\times.
\)

We consider the tensor product of these two RHS structures.
As constructed in the previous section for two partite systems, the nuclear tensor product \( \Phi \overline{\otimes} \tilde{\Phi} \) defines a dense subspace of the doubled Hilbert space, {\( \mathcal{H} \overline{\otimes} \tilde{\mathcal{H}} \)}, and yields a RHS of the form:
\[
\Phi_{\mathrm{TFD}} \subset \mathcal{H} \overline{\otimes} \tilde{\mathcal{H}} \subset \Phi_{\mathrm{TFD}}', \Phi_{\mathrm{TFD}}^\times,
\]
where we set \( \Phi_{\mathrm{TFD}} := \Phi \widehat{\otimes} \tilde{\Phi} \).
The inclusion relations in this triplet follow directly from the general construction of rigged tensor product spaces, as discussed in the previous section~\ref{sec:body1}. 
The triplet, which provides the foundational structure for defining thermal vacuum states and generalized operators within the TFD framework, is rigorously defined as the RHS associated with the tensor product of the individual triplets for the physical and tilde systems.

\section{Rigged Liouville Space and Its Relation to TFD}
\label{sec:body3}

In this section, we introduce the Rigged Liouville Space (RLS) as an extension of the conventional Liouville space, aiming to provide a rigorous framework for generalized quantum states and singular operators in quantum statistical mechanics. We begin by reviewing the Hilbert-Schmidt formulation of the Liouville space, then construct its rigged extension in analogy with the RHS, and finally establish a one-to-one correspondence between the RLS and the triplet space of TFD.

\subsection{Operator Space and Liouville Space}
\label{subsec:liouville}
To rigorously handle generalized operator structures and non-equilibrium states, especially in the presence of continuous spectra or singular kernels, it is natural to consider an extension of Liouville space analogous to the RHS formalism.
The RLS is such a framework, constructed as a triplet:
\[
\Phi_{\mathfrak{L}} \subset \mathfrak{L}(\mathcal{H}) \subset \Phi_{\mathfrak{L}}',\Phi_{\mathfrak{L}}^\times,
\]
where \( \mathfrak{L}(\mathcal{H}) \) denotes the Hilbert space of Hilbert-Schmidt operators on \( \mathcal{H} \), \( \Phi_{\mathfrak{L}} \) is a dense nuclear subspace of test operators, and \( \Phi_{\mathfrak{L}}' \) and \(  \Phi_{\mathfrak{L}}^\times \) are its  dual and anti-dual spaces, respectively. 
%
%
In the conventional operator formulation of quantum statistical mechanics, physical states are described by density operators, positive trace-class operators on \( \mathcal{H} \), and observables are represented as elements of the Hilbert-Schmidt class. The space \( \mathfrak{L}(\mathcal{H}) \) carries the inner product \( \langle A, B \rangle _\textrm{HS} = \mathrm{Tr}(A^\dagger B) \), which makes it a Hilbert space known as the Liouville space.

While this setting suffices for bounded operators and equilibrium states, it becomes inadequate for describing singular structures such as projection operators onto improper eigenstates, delta-function-like kernels, or unbounded operator-valued distributions relevant to non-equilibrium dynamics. To accommodate such generalized states, we introduce the space \( \Phi_{\mathfrak{L}} \), a nuclear subspace densely embedded in \( \mathfrak{L}(\mathcal{H}) \), equipped with a topology that reflects physical constraints.
%
The corresponding dual space \( \Phi_{\mathfrak{L}}' \) then includes distributional operators that are not necessarily Hilbert-Schmidt, and allows for the rigorous definition of pairings of the form \( \langle F, A \rangle \) with \( F \in \Phi_{\mathfrak{L}}' \) and \( A \in \Phi_{\mathfrak{L}} \), even when \( F \notin \mathfrak{L}(\mathcal{H}) \).

This extended framework provides a natural foundation for expanding the operator formulation of quantum statistical mechanics beyond the limits of conventional Hilbert space theory, and prepares the ground for establishing a structural correspondence with the triplet space of TFD, which we explore in the next subsection.

\subsection{Theorem on the Correspondence Mapping}

Based on the structural correspondence between the rigged Liouville space \( \Phi_{\mathfrak{L}} \subset \mathfrak{L}(\mathcal{H}) \subset \Phi_{\mathfrak{L}}',\Phi_{\mathfrak{L}}^\times \) and the rigged space associated with TFD, we propose the existence of a nuclear mapping that realizes a one-to-one functional relationship between the two frameworks. This correspondence serves as a bridge between the operator formalism of statistical mechanics and the pure-state formalism of TFD.
To demonstrate this fact, we prepare the following lemma.
\begin{lemma}
    \label{lemma4.1}
    Let $\Phi _1 \subset \mathcal{H}_1 \subset \Phi^\prime_1,\Phi^\times_1$ be an RHS  and let $\mathcal{H}_2$ be a Hilbert space such that there is a unitary transformation $U$ from $\mathcal{H}_1$ onto $\mathcal{H}_2$.
Then, there is a linear dense subspace $\Phi_2$ of $\mathcal{H}_2$, equipping the nuclear topology $\tau_2$, and the triplet $\Phi _2 \subset \mathcal{H}_2 \subset \Phi^\prime_2,\Phi^\times_2$ is an RHS.
Furthermore, the restriction $U_{\Phi_1}$ of the unitary transformation $U$ to the nuclear space $\Phi_1$ is isomorphic onto the nuclear space $(\Phi_2,\tau_2)$.

\end{lemma}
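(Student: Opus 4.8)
The plan is to transport the entire rigged structure of $\mathcal{H}_1$ to $\mathcal{H}_2$ through $U$, taking the image of $\Phi_1$ as the candidate subspace and pushing the nuclear topology forward along $U$. Concretely, I would first define $\Phi_2 := U(\Phi_1)$. Since $U$ is a linear bijection of $\mathcal{H}_1$ onto $\mathcal{H}_2$, the set $\Phi_2$ is a linear subspace of $\mathcal{H}_2$; and because a unitary map is in particular a norm-homeomorphism carrying dense sets to dense sets, the density of $\Phi_1$ in $\mathcal{H}_1$ immediately yields the density of $\Phi_2$ in $\mathcal{H}_2$.

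Next I would equip $\Phi_2$ with the topology $\tau_2$ obtained by pushing $\tau_{\Phi}$ of $\Phi_1$ forward along the bijection $U|_{\Phi_1}:\Phi_1\to\Phi_2$; equivalently, if $\{p_\alpha\}$ is a directed family of seminorms generating the topology of $\Phi_1$, I would set $q_\alpha(\psi):=p_\alpha(U^{-1}\psi)$ for $\psi\in\Phi_2$ and let $\tau_2$ be the locally convex topology generated by $\{q_\alpha\}$. By this very definition $U|_{\Phi_1}$ is a linear homeomorphism of $\Phi_1$ onto $(\Phi_2,\tau_2)$, which already secures the final assertion of the lemma. Nuclearity is a property invariant under topological linear isomorphism (it may be read off, for instance, from the existence of a fundamental system of Hilbertian seminorms whose canonical linking maps are nuclear, a condition transported verbatim by $q_\alpha=p_\alpha\circ U^{-1}$), so $(\Phi_2,\tau_2)$ is a nuclear space.

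It then remains to check that $(\Phi_2,\tau_2)\subset\mathcal{H}_2$ is a genuine RHS, i.e.\ that the inclusion $i_2:(\Phi_2,\tau_2)\hookrightarrow\mathcal{H}_2$ is continuous. This follows from the factorization $i_2 = U\circ i_1\circ(U|_{\Phi_1})^{-1}$, where $i_1:\Phi_1\hookrightarrow\mathcal{H}_1$ is the given continuous inclusion: the right-hand side is a composition of continuous maps, the two outer ones being homeomorphisms and $U$ being bounded. Once the continuous, dense, nuclear embedding $\Phi_2\subset\mathcal{H}_2$ is in hand, the dual and anti-dual spaces $\Phi_2',\Phi_2^\times$ are defined as the continuous linear and anti-linear functionals on $(\Phi_2,\tau_2)$, and the chain $\Phi_2\subset\mathcal{H}_2\subset\Phi_2',\Phi_2^\times$ is obtained in the standard way by identifying $\mathcal{H}_2$ with a subspace of $\Phi_2'$ through the Riesz map restricted to $\Phi_2$, injectivity of that restriction being exactly the density of $\Phi_2$.

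I do not expect any single step to present a deep obstacle, since the argument is a transport of structure; the only point demanding care is the verification that nuclearity passes through $U|_{\Phi_1}$, which I would handle by expressing nuclearity through the transported seminorm system $\{q_\alpha\}$ and invoking its invariance under topological isomorphism, rather than re-deriving it from the metric of $\mathcal{H}_2$. A secondary subtlety worth a remark is that only the homeomorphism property of $U$ is used in building the triplet, whereas full unitarity is what makes this the natural construction in the TFD and Liouville setting and guarantees compatibility of the two inner products.
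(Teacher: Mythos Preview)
Your argument is correct and matches the paper's proof in all essentials: both set $\Phi_2=U(\Phi_1)$, transport $\tau_1$ along $U|_{\Phi_1}$ (you via pushed-forward seminorms, the paper via a pushed-forward local base), obtain density from unitarity, and verify continuity of the inclusion through the identical factorization $i_2=U\circ i_1\circ (U|_{\Phi_1})^{-1}$. The only cosmetic differences are that the paper presents the construction of $(\Phi_2,\tau_2)$ \emph{after} verifying its properties, and checks nuclearity via the ``every continuous map into a Hilbert space is nuclear'' criterion rather than your Hilbertian-seminorm formulation; both are standard equivalent characterizations.
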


\begin{proof}
    By hypothesis, for the nuclear space $\Phi_1=(\Phi_1,\tau_1)$, the including map $i:(\Phi_1,\tau_1) \rightarrow (\mathcal{H}_1,\langle \cdot, \cdot \rangle_ {\mathcal{H}_1}), \varphi \mapsto \varphi$ is continuous and $\Phi_1$ is dense of $(\mathcal{H}_1,\langle \cdot, \cdot \rangle_ {\mathcal{H}_1})$.
    Now, we assume the existence of a topological vector space $(\Phi_2,\tau_2)$ as a subspace of $\mathcal{H}_2$ such that the restriction $T:=U_{\Phi_1} : (\Phi_1,\tau_1) \rightarrow (\Phi_2,\tau_2)$ is isomorphic.
    We first show that the topological vector space $(\Phi_2,\tau_2)$ becomes nuclear.
    Letting $\mathcal{H}$ be a Hilbert space,
    the nuclearity of $(\Phi_1,\tau_1)$ provides a nuclear mapping $u:(\Phi_1,\tau_1) \rightarrow \mathcal{H}$.
    Then, the composition $u\circ T^{-1}:(\Phi_2,\tau_2) \rightarrow \mathcal{H}$ of $u$ and $T^{-1}$ is a nuclear mapping.
    Therefore, $(\Phi_2,\tau_2)$ is a nuclear space.
    The nuclear space $(\Phi_2,\tau_2)$ is dense in $(\mathcal{H}_2,\tau_2)$.
    This fact can be easily proved because $\Phi_1$ is dense in $\mathcal{H}_1$ and $\Phi_1$ connects with $\Phi_2$ by the unitary transformation $U$.  
    Note that the including mapping $i' : (\Phi_2,\tau_2) \rightarrow (\mathcal{H}_2,\langle \cdot, \cdot \rangle_ {\mathcal{H}_2})$ is continuous, since $i'=U\circ i \circ T^{-1}$ holds.
    Thus, the triplet $\Phi_2\subset \mathcal{H}_2\subset \Phi_2^{\prime},\Phi_2^{\times}$ constructs an RHS.
    Finally, we show the existence of $(\Phi_2,\tau_2)$.
    Let $\Phi_2=U(\Phi_1)$; it is a linear subspace of $\mathcal{H}_2$.
    Now, define a map $T:(\Phi_1,\tau_1)\rightarrow \Phi_2$ as $T(\varphi)=U\varphi$, which is linear bijective.
    Let $\tau_2$ be the induced topology from $T$ where $\tau_2$ has the local base $\beta_2 = \{W\subset \Phi_2 \mid T^{-1}(W)\in \beta _1 \}$ ($\beta _1$ is a local base of $\tau _1$).
    Then, it is easy to confirm that $T:(\Phi_1,\tau_1) \rightarrow (\Phi_2,\tau_2)$ is isomorphic. 
\end{proof}

By using the above lemma, the RLS can be constructed as follows.
Let $\Phi \subset \mathcal{H} \subset \Phi^\prime, \Phi^\times$ be an RHS and let $(\mathfrak{L}(\mathcal{H}),\langle \cdot, \cdot \rangle _\textrm{HS})$ be a Liouville space.
Then, there exists uniquely unitary transformation $U : \mathcal{H} \overline{\otimes} \mathcal{H} \rightarrow \mathfrak{L}(\mathcal{H})$ having $U(\varphi\otimes C\psi)=P_{\varphi,\psi}$ for any $\varphi,\psi \in \mathcal{H}$,
where $P_{\varphi,\psi}(x)=\langle \varphi, x \rangle _\textrm{HS}\psi$ ($x\in \mathcal{H}$) and $C:\mathcal{H}\rightarrow \mathcal{H}$ is conjugate.
For the tensor product of the RHS, $\Phi \widehat{\otimes} \Phi \subset \mathcal{H} \overline{\otimes} \mathcal{H} \subset (\Phi \widehat{\otimes} \Phi)',\; (\Phi \widehat{\otimes} \Phi)^\times$, which is also an RHS, Lemma 1 shows that putting $\Phi_{\mathfrak{L}}=U(\Phi \widehat{\otimes} \Phi)$, $\Phi_{\mathfrak{L}}$ is the dense subspace of $\mathfrak{L}(\mathcal{H})$ such that $\Phi_{\mathfrak{L}} \subset \mathfrak{L}(\mathcal{H}) \subset \Phi_{\mathfrak{L}}', \Phi_{\mathfrak{L}}^{\times}$ is the RHS. 
Also, the restriction $T=U_{\Phi \widehat{\otimes} \Phi}$ of $U$ into $\Phi \widehat{\otimes} \Phi$ is isomorphic from $\Phi \widehat{\otimes} \Phi$ onto $\Phi_{\mathfrak{L}}$.
Accordingly, by taking $C$ as the tilde operator, we obtain the following result. 

\begin{theorem}
There exists an isomorphic mapping
\[
\Lambda: \Phi_{\mathfrak{L}} \longrightarrow \Phi_\textrm{TFD}
\]

Here, \( \tilde{\Lambda} \) denotes the continuous extension of \( \Lambda \) to the dual space.

\end{theorem}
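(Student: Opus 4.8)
The plan is to obtain $\Lambda$ as nothing more than the test-space restriction of the unitary $U$, reading off the tilde identification directly from its definition. Recall from the construction preceding the theorem that $U:\mathcal{H}\overline{\otimes}\mathcal{H}\to\mathfrak{L}(\mathcal{H})$ is the unique unitary with $U(\varphi\otimes C\psi)=P_{\varphi,\psi}$, and that Lemma~\ref{lemma4.1}, applied with $\mathcal{H}_1=\mathcal{H}\overline{\otimes}\mathcal{H}$, $\mathcal{H}_2=\mathfrak{L}(\mathcal{H})$ and $\Phi_1=\Phi\widehat{\otimes}\Phi$, already yields that $T:=U_{\Phi\widehat{\otimes}\Phi}$ is a nuclear-space isomorphism of $\Phi\widehat{\otimes}\Phi$ onto $\Phi_{\mathfrak{L}}=U(\Phi\widehat{\otimes}\Phi)$. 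The only additional ingredient is to interpret the second tensor factor through the tilde operator: taking $C$ as the tilde conjugation, I set $\tilde{\psi}:=C\psi$, so that $\tilde{\Phi}=C(\Phi)$ and $\Phi_{\mathrm{TFD}}=\Phi\widehat{\otimes}\tilde{\Phi}$ is literally the subspace of $\mathcal{H}\overline{\otimes}\tilde{\mathcal{H}}$ generated by the elementary tensors $\varphi\otimes C\psi$.

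First I would define $\Lambda:=T^{-1}$, viewed as a map into $\Phi_{\mathrm{TFD}}$. Explicitly, on rank-one generators $\Lambda(P_{\varphi,\psi})=U^{-1}(P_{\varphi,\psi})=\varphi\otimes C\psi=\varphi\otimes\tilde{\psi}\in\Phi_{\mathrm{TFD}}$, and $\Lambda$ extends by linearity and continuity to all of $\Phi_{\mathfrak{L}}$. Since $U$ is linear, so are $T$ and $T^{-1}$; bijectivity of $\Lambda$ onto $\Phi_{\mathrm{TFD}}$ follows because $U$ carries $\Phi\widehat{\otimes}\Phi$ bijectively onto $\Phi_{\mathfrak{L}}$ while the generators $\varphi\otimes C\psi$ exhaust $\Phi\widehat{\otimes}\tilde{\Phi}$. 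Bicontinuity is inherited directly from Lemma~\ref{lemma4.1}: both $\Lambda$ and $\Lambda^{-1}=T$ are restrictions of the nuclear-space isomorphism furnished there, so $\Lambda$ is a topological isomorphism of the nuclear spaces $\Phi_{\mathfrak{L}}$ and $\Phi_{\mathrm{TFD}}$.

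Next I would produce the dual extension $\tilde{\Lambda}$ by transposition. For $G\in\Phi_{\mathfrak{L}}'$ I set $\tilde{\Lambda}(G):=G\circ\Lambda^{-1}\in\Phi_{\mathrm{TFD}}'$; continuity of $\tilde{\Lambda}(G)$ as a functional is immediate as a composition of continuous maps, and because $\Lambda$ is a topological isomorphism its transpose $(\Lambda^{-1})^{t}=\tilde{\Lambda}$ is a bijection of the duals whose inverse $\Lambda^{t}$ is again the transpose of a continuous map, hence continuous for the strong dual topologies. Running the same argument with anti-linear functionals gives the corresponding isomorphism of the anti-duals $\Phi_{\mathfrak{L}}^{\times}$ and $\Phi_{\mathrm{TFD}}^{\times}$, so the entire triplets are placed in one-to-one correspondence and $\tilde{\Lambda}$ is the claimed continuous extension of $\Lambda$ to the generalized-state level.

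The hard part is not the test-space isomorphism---that is essentially a corollary of Lemma~\ref{lemma4.1}---but keeping the bookkeeping of the anti-linear tilde conjugation $C$ consistent and confirming that the isomorphism genuinely lifts to a \emph{topological} isomorphism of the strong duals rather than a merely algebraic one. For the former I would record once and for all that $U$ is linear even though $C$ is anti-linear, so that no residual conjugation survives in $\Lambda$ itself. For the latter I would invoke the standard duality theory of nuclear (hence reflexive and barrelled) spaces, under which a topological isomorphism of the test spaces induces a topological isomorphism of their strong duals; this guarantees that $\tilde{\Lambda}$ maps the distributional and singular operators in $\Phi_{\mathfrak{L}}'$ onto the generalized thermal states in $\Phi_{\mathrm{TFD}}'$ homeomorphically, completing the correspondence.
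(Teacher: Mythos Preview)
Your proposal is correct and follows the same route as the paper: the theorem is stated as an immediate consequence of the construction preceding it, namely applying Lemma~\ref{lemma4.1} to the unitary $U:\mathcal{H}\overline{\otimes}\mathcal{H}\to\mathfrak{L}(\mathcal{H})$ with $\Phi_1=\Phi\widehat{\otimes}\Phi$, obtaining the nuclear-space isomorphism $T=U_{\Phi\widehat{\otimes}\Phi}$, and then identifying the second tensor factor via the tilde conjugation $C$ so that $\Phi\widehat{\otimes}\Phi$ becomes $\Phi_{\mathrm{TFD}}$ and $\Lambda=T^{-1}$. Your treatment of the dual extension $\tilde{\Lambda}$ via transposition and the remarks on the anti-linear bookkeeping are more explicit than what the paper records, but the underlying argument is the same.
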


%
It provides a rigorous mathematical framework that connects operator-valued statistical states with doubled Hilbert space representations of thermal pure states. 
In particular, it respects trace dualities, identifies thermal density operators with TFD vacuum-like vectors, and accommodates singular or distributional operator structures through the extended dual spaces. 
This correspondence not only clarifies the structural foundation of TFD but also justifies the operator-to-state translation at the level of rigged spaces.


\section{Conclusion}
\label{sec:conclusion}
In this work, we have constructed a triplet structure for the thermal state space of TFD by applying the tensor product formulation of quantum many-body systems.
This RHS structure provides a topological foundation for generalized thermal states and allows the inclusion of delta-like and projection-like elements that naturally appear in thermal field theory. 
We then introduced the RLS as a triplet extension of the Hilbert–Schmidt space of density operators and established a one-to-one correspondence between the TFD triplet and the RLS triplet via an appropriately defined nuclear mapping \( \Lambda: \Phi_\mathfrak{L} \to \Phi_\mathrm{TFD} \). 
This mapping preserves essential structures such as trace duality, thermal purity, and distributional interpretation, thereby providing a rigorous and unified framework for quantum statistical systems.
From a physical standpoint, the correspondence between the RLS and the triplet structure of TFD provides a rigorous justification for representing thermal mixed states as entangled pure states at the level of nuclear spaces, beyond the conventional formalism.

We also aim to extend the RLS and the triplet structure of the TFD formalism to non-Hermitian dynamics in open quantum systems, and to develop rigged frameworks for describing time evolution in non-equilibrium statistical mechanics. 
These developments are expected to provide a more comprehensive operator-theoretic foundation for quantum statistical theory.
Furthermore, a significant direction for future research is to elucidate how the established correspondence between RLS and the triplet structure of the TFD relates to the Gelfand--Naimark--Segal construction, which represents a state on a \( C^* \)-algebra as a vector in a Hilbert space.

\bigskip

\noindent
{\bf Acknowledgement}

The authors are grateful to
Prof. Y.~Yamanaka, 
Prof. Y.~Yamazaki, 
Prof. T.~Yamamoto,
and Emeritus A.~Kitada for their useful comments and encouragement.
This work was supported by the JSPS KAKENHI Grant Number 22K13976.

\noindent
{\bf Data Availability}

Data sharing is not applicable to this article as no new data were created or analyzed in this study.

\bigskip


\end{document}